\documentclass[10pt,twocolumn]{article}

\usepackage{geometry}
\usepackage{amsmath,amssymb,amsthm}
\usepackage{listings}
\usepackage{hyperref}
\usepackage{enumitem}
\usepackage{tikz}
\usepackage{dblfloatfix}
\usepackage{times}

\theoremstyle{plain}
\newtheorem{theorem}{Theorem}[section]
\newtheorem{lemma}[theorem]{Lemma}
\newtheorem{proposition}[theorem]{Proposition}

\theoremstyle{definition}

\theoremstyle{remark}

\title{\vspace{-1em}A Geometric Derivation of the\\
Bitner--Ehrlich--Reingold Loopless Gray Code Algorithm}
\author{Andrew Au\\[0.2em]
\small Independent Researcher\\[0.2em]
\small Corresponding author: \texttt{cshung@gmail.com}}
\date{\vspace{-1.5em}}

\begin{document}

\twocolumn[
\maketitle
\vspace{-1em}
]

\begin{abstract}
The Bitner--Ehrlich--Reingold algorithm generates the binary reflected
Gray code with constant work per codeword, using a focus-pointer array.
Its compact update is easy to state but gives little indication of why
such pointers should exist.  This note reconstructs a geometric route to
the algorithm.  The sequence of flipped bit positions is the ruler
sequence, OEIS A007814.  We realize its finite prefixes as in-order
traversals of recursively expanding trees, decorate each activation with
its nearest ancestor to the right, contract all stack operations between
successive outputs into successor jumps, and prepare the bounded
level-indexed stack by predicting its future slot values.  This
gives and proves a branch-based loopless generator.  The construction
grew from an earlier informal public exposition by the author.  Viewing horizontal
position in the tree as time then explains each stack-slot write as a
prediction for the next activation at the same level.  Finally, the two
possible future roles---an inherited continuation for a right child and
a default continuation for a left child---are scheduled together,
leading to the two assignments of the published focus-pointer algorithm.
The account is a reconstructed derivation, not a claim about the
historical reasoning of Bitner, Ehrlich, or Reingold.
\end{abstract}

\noindent\textbf{Keywords:}
Gray codes; combinatorial generation; loopless algorithms; ruler
sequence; focus pointers.

\section{Introduction}

For \(n\geq1\), let the bits of an \(n\)-bit binary reflected Gray code
(BRGC) be numbered \(0,\ldots,n-1\).  The BRGC orders the vertices of the \(n\)-dimensional
hypercube along a Hamiltonian path; for \(n\geq2\), the closing edge makes
it a Hamiltonian cycle.  This note studies the induced coordinate-flip
schedule rather than the cube embedding itself.  The positions flipped
after the initial all-zero word, for \(n=4\), are
\[
0,\ 1,\ 0,\ 2,\ 0,\ 1,\ 0,\ 3,\ 0,\ 1,\ 0,\ 2,\ 0,\ 1,\ 0.
\]
Bitner, Ehrlich, and Reingold gave a loopless algorithm for producing
the BRGC \cite{ber76}.  Knuth presents its focus-pointer form as
\emph{Algorithm L} \cite[\S 7.2.1.1]{knuth4a}:

\par\noindent\begin{minipage}{\columnwidth}
\begin{lstlisting}
a[0..n - 1] = 0; f[j] = j for 0 <= j <= n
visit(a)
repeat:
    j = f[0]; f[0] = 0
    if j == n: terminate
    flip(a[j])
    f[j] = f[j + 1]; f[j + 1] = j + 1
    visit(a)
\end{lstlisting}
\end{minipage}\par

For the remainder of the paper, we analyze only the sequence of flipped
positions.  We say that Algorithm L \emph{emits} \(j\) when it executes
\lstinline!flip(a[j])!; the \lstinline!visit(a)! operations output the
corresponding Gray-code words but play no role in the derivation.

At the start of each iteration, \(f[0]\) is the next level, with \(n\)
serving as the termination sentinel; the remaining entries encode
deferred successor information.  The purpose of the derivation is to
identify that information geometrically.

Every iteration uses \(O(1)\) operations, after \(O(n)\) initialization.
The difficulty is motivational: why should the two focus-pointer writes
encode the recursive Gray-code order?  For broader background on
combinatorial generation and Gray codes, see Ruskey \cite{ruskey03}.

This note develops an answer from the geometry of the flip sequence.
The route originated in an exploratory Stack Overflow answer posted by
the author in 2021 \cite{au21}.  That answer found the decorated tree and
a branch-based loopless algorithm, but it moved too quickly from
recursive execution to successor jumps and stopped short of deriving
Algorithm L.  Here we fill those gaps.

We make no claim about how the published algorithm was historically
discovered.  Our claim is mathematical: each transformation below
preserves the generated flip sequence, and the endpoint is Algorithm L.

\section{The Ruler Sequence}
\label{sec:ruler}

Let \(S_k\) be the flip-position sequence for a \((k+1)\)-bit BRGC.
Recursive reflection gives
\[
S_{k+1}=S_k,\ k+1,\ \overleftarrow{S_k}.
\]
Here \(\overleftarrow{S_k}\) denotes \(S_k\) in reverse order.
The sequence \(S_0\) is a palindrome, and this recurrence preserves
palindromicity.  Therefore
\[
S_0=(0),\qquad S_{k+1}=S_k,\ k+1,\ S_k.
\]
Thus the finite sequences are nested prefixes of
\[
0,1,0,2,0,1,0,3,0,1,0,2,0,1,0,4,\ldots.
\]
This is the ruler sequence, OEIS A007814 \cite{oeisA007814}.

\begin{proposition}
For every \(k\geq0\),
\[
S_k=(\nu_2(1),\ldots,\nu_2(2^{k+1}-1)).
\]
Consequently, for \(t\geq1\), the \(t\)-th term of the infinite flip
sequence is
\[
\nu_2(t)=\max\{\ell:2^\ell\text{ divides }t\}.
\]
\end{proposition}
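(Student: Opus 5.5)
We argue by induction on \(k\), using the recurrence \(S_0=(0)\), \(S_{k+1}=S_k,\ k+1,\ S_k\) established just above. The statement has \(2^{k+1}-1\) entries on each side. This matches the length recurrence \(|S_{k+1}|=2|S_k|+1\) with \(|S_0|=1\), so lengths agree at every stage. For the base case \(k=0\), we have \(S_0=(0)=(\nu_2(1))\).

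For the inductive step, assume \(S_k=(\nu_2(1),\ldots,\nu_2(2^{k+1}-1))\), and split the index range \(1\le t\le 2^{k+2}-1\) into three blocks.

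The first block is \(1\le t\le 2^{k+1}-1\). Here the \(t\)-th entry of \(S_{k+1}\) is the \(t\)-th entry of \(S_k\), which is \(\nu_2(t)\) by hypothesis.

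The middle block is the single index \(t=2^{k+1}\). Its entry is \(k+1=\nu_2(2^{k+1})\).

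The last block is \(t=2^{k+1}+s\) with \(1\le s\le 2^{k+1}-1\). Its entry is the \(s\)-th entry of \(S_k\), namely \(\nu_2(s)\). The key arithmetic fact is that \(\nu_2(2^{k+1}+s)=\nu_2(s)\) whenever \(0<s<2^{k+1}\). The reason is that \(\nu_2(s)\le k\), so \(2^{\nu_2(s)}\) divides both summands, while \(2^{\nu_2(s)+1}\) divides \(2^{k+1}\) but not \(s\). This is the standard rule \(\nu_2(a+b)=\min(\nu_2(a),\nu_2(b))\) when the two valuations differ. With the three blocks checked, the induction is complete.

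For the consequence, the sequences \(S_k\) are nested prefixes, since \(S_{k+1}\) begins with \(S_k\). So the infinite flip sequence is well defined as their union. Given \(t\ge1\), choose \(k\) with \(2^{k+1}-1\ge t\). Then the \(t\)-th term of the infinite sequence is the \(t\)-th entry of \(S_k\), which equals \(\nu_2(t)\). The formula \(\nu_2(t)=\max\{\ell:2^\ell\mid t\}\) is just the definition of the \(2\)-adic valuation, and this maximum exists because \(t\ge1\).

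The only step requiring any care is the valuation identity in the last block. It relies on the strict bound \(s<2^{k+1}\); everything else is bookkeeping of indices.
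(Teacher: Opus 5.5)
Your proof is correct and follows essentially the same route as the paper: induction on \(k\) via \(S_{k+1}=S_k,\ k+1,\ S_k\), where the only real content is the identity \(\nu_2(2^{k+1}+s)=\nu_2(s)\) for \(0<s<2^{k+1}\). The paper checks this identity by writing \(s=2^q u\) with \(u\) odd and observing that the cofactor in \(2^{k+1}+s=2^q(2^{k+1-q}+u)\) is odd, which is the same fact as your divisibility argument.
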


\begin{proof}
The claim is immediate for \(k=0\).  Assume it holds for \(S_k\).
For \(1\leq t<2^{k+1}\), write \(t=2^q u\), where \(u\) is odd and
\(q\leq k\).  Then
\[
2^{k+1}+t=2^q\bigl(2^{k+1-q}+u\bigr),
\]
and the parenthesized factor is odd.  Hence
\[
\nu_2(2^{k+1}+t)=q=\nu_2(t).
\]
Also \(\nu_2(2^{k+1})=k+1\).  Therefore
\[
(\nu_2(1),\ldots,\nu_2(2^{k+2}-1))
=S_k,\ k+1,\ S_k,
\]
which is exactly the recurrence for \(S_{k+1}\).
Since the nested prefixes \(S_k\) exhaust the infinite sequence, the
consequent formula holds for every \(t\geq1\).
\end{proof}

The valuation formula gives a direct arithmetic generator, but it does
not explain the focus pointers.  For that, the recursive geometry is
more revealing.

\section{An Expanding In-Order Tree}
\label{sec:tree}

Number levels upward from the leaves, beginning with level \(0\).  The
following recursive procedure generates \(S_k\).
\par\noindent\begin{minipage}{\columnwidth}
\begin{lstlisting}
def gen(level):
    if level == 0:
        print(0)
    else:
        gen(level - 1)
        print(level)
        gen(level - 1)
\end{lstlisting}
\end{minipage}\par
For the call \lstinline!gen(k)!, the activation tree is complete: a call
at level \(k>0\) has two children at level \(k-1\), and a call at level
\(0\) is a leaf.  Printing occurs in-order.  We use \emph{activation},
\emph{node}, and \emph{occurrence} interchangeably for a call represented
in this tree.

Thus \lstinline!gen(k)! emits \(S_k\).  The tree need not stop at a fixed
root: the tree rooted at level \(k\) is the left subtree of a future node
at level \(k+1\), which in turn belongs to the tree rooted at level
\(k+2\).  The direct limit generates the infinite ruler sequence, and
stopping before level \(k+1\) cuts out exactly \(S_k\).

In particular, a finite \(n\)-bit instance is rooted at level \(n-1\),
with its sentinel future parent at level \(n\).

\section{Right-Parent Decorations}
\label{sec:decoration}

For an occurrence \(v\), follow its parent chain upward in the drawing.
Its \emph{right parent} is the first ancestor reached along an edge that
moves upward and to the right; this rightward turn gives the object its
name.  Equivalently, it is the nearest ancestor whose left subtree
contains \(v\), reached on the first ascent from a left child; this is
the ancestor used in the usual in-order-successor rule.  Decorate each
node by the level of this ancestor.  The future parent above the current
root supplies the decoration inherited along the entire rightmost
path, including the rightmost leaf.  Expanding the tree does not change
an existing node's right parent: the current root already treats the
next-stage root as its parent.

Figure~\ref{fig:decorated-tree} shows the decorated tree for \(n=4\).
The horizontal coordinate is in-order time, so nodes on the same level
appear in exactly the order in which a level-indexed stack slot will
serve them.  The dashed node is the future parent that decorates the
finite tree's rightmost path.

\begin{figure*}[b]
\centering
\begin{tikzpicture}[
    x=0.78cm,
    y=1.05cm,
    node/.style={
      draw,
      rounded corners=1.5pt,
      fill=white,
      inner xsep=2.5pt,
      inner ysep=1.5pt,
      font=\scriptsize
    },
    future/.style={node, dashed},
    left edge/.style={draw=blue!65!black, line width=0.55pt},
    right edge/.style={draw=orange!85!black, line width=0.55pt},
    time/.style={->, >=stealth, draw=black!65}
]

\node[node] (n01) at (1,0) {\(\mathbf{0}(1)\)};
\node[node] (n03) at (3,0) {\(\mathbf{0}(2)\)};
\node[node] (n05) at (5,0) {\(\mathbf{0}(1)\)};
\node[node] (n07) at (7,0) {\(\mathbf{0}(3)\)};
\node[node] (n09) at (9,0) {\(\mathbf{0}(1)\)};
\node[node] (n11) at (11,0) {\(\mathbf{0}(2)\)};
\node[node] (n13) at (13,0) {\(\mathbf{0}(1)\)};
\node[node] (n15) at (15,0) {\(\mathbf{0}(4)\)};

\node[node] (n12) at (2,1) {\(\mathbf{1}(2)\)};
\node[node] (n16) at (6,1) {\(\mathbf{1}(3)\)};
\node[node] (n110) at (10,1) {\(\mathbf{1}(2)\)};
\node[node] (n114) at (14,1) {\(\mathbf{1}(4)\)};

\node[node] (n24) at (4,2) {\(\mathbf{2}(3)\)};
\node[node] (n212) at (12,2) {\(\mathbf{2}(4)\)};

\node[node] (n38) at (8,3) {\(\mathbf{3}(4)\)};
\node[future] (n416) at (16,4) {\(\mathbf{4}\)};

\draw[left edge]  (n12) -- (n01);
\draw[right edge] (n12) -- (n03);
\draw[left edge]  (n16) -- (n05);
\draw[right edge] (n16) -- (n07);
\draw[left edge]  (n110) -- (n09);
\draw[right edge] (n110) -- (n11);
\draw[left edge]  (n114) -- (n13);
\draw[right edge] (n114) -- (n15);
\draw[left edge]  (n24) -- (n12);
\draw[right edge] (n24) -- (n16);
\draw[left edge]  (n212) -- (n110);
\draw[right edge] (n212) -- (n114);
\draw[left edge]  (n38) -- (n24);
\draw[right edge] (n38) -- (n212);
\draw[left edge, dashed] (n416) -- (n38);

\node[anchor=east, font=\scriptsize] at (0.25,0) {level \(0\)};
\node[anchor=east, font=\scriptsize] at (0.25,1) {level \(1\)};
\node[anchor=east, font=\scriptsize] at (0.25,2) {level \(2\)};
\node[anchor=east, font=\scriptsize] at (0.25,3) {level \(3\)};
\node[anchor=west, font=\scriptsize] at (16.45,4) {future};

\draw[time] (0.5,-0.65) -- (15.5,-0.65)
  node[midway, below, font=\scriptsize] {in-order time};

\draw[left edge] (1.1,4.05) -- (1.8,4.05)
  node[right, text=black, font=\scriptsize] {left child};
\draw[right edge] (5.0,4.05) -- (5.7,4.05)
  node[right, text=black, font=\scriptsize] {right child};
\node[anchor=west, font=\scriptsize] at (9.15,4.05)
  {\(\mathbf{k}(r)\): level \(k\), right parent \(r\)};
\end{tikzpicture}
\caption{The level-\(3\) tree as the left subtree of a future level-\(4\)
node.  Horizontal position is traversal time.}
\label{fig:decorated-tree}
\end{figure*}

The decorations arise directly from the recursive calls:
\par\noindent\begin{minipage}{\columnwidth}
\begin{lstlisting}
def gen(level, right_parent):
    if level == 0:
        emit(0, right_parent)
    else:
        gen(level - 1, level)          # left
        emit(level, right_parent)
        gen(level - 1, right_parent)   # right
\end{lstlisting}
\end{minipage}\par
For a tree rooted at level \(k\), call this procedure as
\lstinline!gen(k, k + 1)!; the second argument is supplied by the future
level-\((k+1)\) parent.
A left child receives its parent level; a right child inherits its
parent's own right parent.

\section{Contracting the Recursive Control Flow}
\label{sec:contraction}

The recursive generator already takes \(O(1)\) amortized work per output.
Every activation prints exactly once.  Charge the entry and exit of each
activation to that print; every output is then charged exactly one push
and one pop.  This aggregate bound does not give constant delay, however,
because several returns and calls may still occur between two particular
outputs.

We do not need to simulate those silent operations.  They form a path
from one printed node to its in-order successor and may be contracted
into a single successor edge.

\begin{lemma}[Successor contraction]
\label{lem:successor}
Starting at the leftmost leaf, repeatedly moving directly to the
in-order successor visits nodes in exactly the order emitted by the
recursive generator.
\end{lemma}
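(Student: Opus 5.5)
The plan is to show that the recursive generator \lstinline!gen! of Section~\ref{sec:tree} prints the nodes of its activation tree in in-order. The lemma then follows from two standard facts: in-order is a total order on the finite tree, and its first element is the leftmost leaf. Iterating the successor map from that first element therefore enumerates the whole order, and stops after the last element, the rightmost leaf. For the infinite, expanding tree we take the successor of the current root to be the next-stage root, so the iteration never runs out.

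\textbf{Step 1.} By induction on the level \(k\), the call \lstinline!gen(k)! prints the nodes of its subtree in in-order. The base case is a leaf, which prints itself. For the inductive step, the body prints the in-order of the left subtree, then the node, then the in-order of the right subtree. This is exactly the definition of in-order. Hence emission order coincides with in-order. This also holds for the decorated version, which only adds an argument.

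\textbf{Step 2.} Identify the in-order successor through the contracted path of silent stack operations. Suppose a node \(v\) has just printed. If \(v\) has a right subtree, the generator next calls into that subtree and descends along left children until it reaches its leftmost leaf, which prints first. If \(v\) is a leaf, the generator returns upward as long as it is returning from a right child. It stops at the first return from a left child, and that ancestor prints next. This ancestor is precisely the right parent of Section~\ref{sec:decoration}. In both cases the next printed node is the usual in-order successor. So each maximal run of silent calls and returns between two prints is a tree path from a node to its successor, and replacing that run by one jump changes nothing in the output.

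\textbf{Step 3.} Starting point and expansion. The first print of \lstinline!gen(k)! comes from its first leaf reached by always recursing left, namely the leftmost leaf. Expanding the tree to level \(k+1\) keeps the old tree as a left subtree. It therefore preserves both the leftmost leaf and every existing successor relation. The only exception is the rightmost leaf, whose successor becomes the future parent. That is consistent with the termination sentinel at level \(n\).

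The main obstacle is making Step~2 rigorous without circularity. The claim that returns stop exactly at the first ancestor entered from a left child requires tracking which half of the body each pending activation is executing. I would handle this with a small invariant on the call stack: at each print, every pending frame is either between its left call and its own print, or inside its right call. The topmost frame of the first kind is the next to print. This is exactly the nearest right ancestor.
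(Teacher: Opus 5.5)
Your proof is correct and follows essentially the same route as the paper: emission order is in-order, and the next output is either the leftmost leaf of the right subtree or the right parent, so contracting the intervening silent calls and returns changes no output; you simply spell out the induction and the stack bookkeeping that the paper compresses into ``the defining property of in-order traversal.'' One small slip: in your opening paragraph it is the rightmost leaf, not the current root, whose successor becomes the next-stage root (the current root's successor is the leftmost leaf of its right subtree), as you yourself state correctly in Step~3.
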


\begin{proof}
This is the defining property of in-order traversal.  If a visited node
has a right subtree, its successor is the leftmost node of that subtree.
Otherwise its successor is its right parent.  Contracting the intervening
tree edges changes no visited node and therefore no output.
\end{proof}

For the finite \(n\)-bit tree, the rule simplifies further:
\begin{itemize}[nosep]
  \item after a nonzero node, the successor is the leftmost leaf of its
        right subtree, hence has label \(0\);
  \item after a leaf \(0(r)\), the successor has label \(r\), unless
        \(r=n\), in which case this leaf is the final output of the
        finite \(n\)-bit tree.
\end{itemize}
The contraction identifies the next output from the current node and its
decoration.  It remains to prepare the recursive stack without simulating
the intervening pushes and pops.

\section{Predicting the Stack}
\label{sec:prediction}

From this point onward, fix an \(n\)-bit Gray code.  Its finite traversal
is \lstinline!gen(n - 1, n)!, with real levels \(0,\ldots,n-1\) and
future sentinel level \(n\).  A root-to-leaf recursion stack therefore
contains at most \(n\) activation records, with at most one activation at
each level.  Each recursive descent decreases \lstinline!level! by one,
so after indexing the stack by level, an activation with value \(j\)
always occupies logical stack slot \(j\).  The level and logical stack
position are the same quantity.

Represent this fixed-height stack by an array \(\mathrm{rp}\).
Entry \(\mathrm{rp}[0]\) stores the current logical stack position, and
\(\mathrm{rp}[j+1]\) stores the right-parent parameter for level \(j\),
where \(0\leq j<n\).
Thus the right-parent value for level \(j\) is stored in
\(\mathrm{rp}[j+1]\).  This one-position offset aligns the array with
Algorithm L from the outset.  Entry \(\mathrm{rp}[0]\) is exceptional:
it stores the next emitted level, not a decoration.

This array remains the logical recursion stack.  What changes is how its
values are prepared: instead of producing them through LIFO pushes and
pops, we predict them directly.  After a level-\(j\) occurrence is
emitted, we may write into its slot the right-parent value that the
\emph{next} level-\(j\) occurrence will need, even if that occurrence is
many outputs in the future.
Accordingly, the invariant is that \(\mathrm{rp}[j+1]\) contains the
decoration of the next not-yet-emitted level-\(j\) occurrence; when
level \(j\) is current, this is the current occurrence's decoration.
Initialize \(\mathrm{rp}[i]=i\), as on the initial descent through left
children; in particular, \(\mathrm{rp}[0]=0\) is the initial current
level.  This gives the branch-based generator immediately:
\par\noindent\begin{minipage}{\columnwidth}
\begin{lstlisting}
rp[i] = i for 0 <= i <= n
while rp[0] != n:
    j = rp[0]
    emit(j)
    r = rp[j + 1]
    if r == j + 1:
        if r < n:
            rp[j + 1] = rp[j + 2]
    else:
        rp[j + 1] = j + 1
    rp[0] = r if j == 0 else 0
\end{lstlisting}
\end{minipage}\par

Here the comparison chooses between two predictions for the next value
of stack slot \(j\):
\begin{align}
r=j+1<n &\implies \mathrm{rp}[j+1]\gets\mathrm{rp}[j+2],
                                                      \tag{R1}\\
r>j+1 &\implies \mathrm{rp}[j+1]\gets j+1.            \tag{R2}
\end{align}
At the root, \(r=j+1=n\); there is no later level-\((n-1)\)
occurrence to predict, so no stack-slot update is required.

\begin{lemma}
\label{lem:left-right}
A level-\(j\) occurrence has decoration \(j+1\) exactly when it is a
left child of level \(j+1\); a right child has decoration greater than
\(j+1\).
\end{lemma}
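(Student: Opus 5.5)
We argue directly from the decorated recursion of Section~\ref{sec:decoration}, namely \lstinline!gen(level, right_parent)! started as \lstinline!gen(n - 1, n)!. First we establish the invariant
\[
\text{every level-}\ell\text{ activation has decoration } r>\ell .
\]
We prove it by induction down the activation tree. The root at level \(n-1\) receives \(n>n-1\). Now take a node at level \(\ell\geq1\) with decoration \(r>\ell\). Its left child has level \(\ell-1\) and receives \(\ell>\ell-1\). Its right child has level \(\ell-1\) and receives \(r>\ell>\ell-1\). So the invariant passes to both children.

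Next we split level-\(j\) occurrences by how they enter the tree.

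\emph{Left child.} If the occurrence is the left child of a level-\((j+1)\) node, the recursive call passes the parent's level as the decoration, so the decoration is exactly \(j+1\).

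\emph{Right child.} If the occurrence is the right child of a level-\((j+1)\) node \(p\), it inherits \(p\)'s decoration. By the invariant applied to \(p\), that decoration is strictly greater than \(j+1\).

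\emph{Root.} The only level-\(j\) occurrence with no real parent is the root, \(j=n-1\). It is the left child of the future sentinel at level \(n\) (the dashed edge in Figure~\ref{fig:decorated-tree}), and its decoration is \(n=j+1\). It therefore falls under the left-child case, provided the sentinel is counted as a parent.

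These three cases are exhaustive, and a right child never has decoration equal to \(j+1\). So the decoration equals \(j+1\) exactly in the left-child case, and it exceeds \(j+1\) in the right-child case. This is the ``exactly when'' of the statement.

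The argument involves no real difficulty. The only points needing care are stating the invariant with strict inequality and treating the root as the left child of the future level-\(n\) parent. As a cross-check, the claim also follows from the geometric definition. A right child's right parent is found by continuing past its parent, so it is a proper ancestor of that parent. Levels strictly increase along the parent chain, so its level is greater than \(j+1\).
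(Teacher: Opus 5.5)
Your proof is correct and follows the paper's argument: a left child receives \(j+1\) from the call, a right child inherits its level-\((j+1)\) parent's decoration, which exceeds \(j+1\), and the root counts as the left child of the future level-\(n\) node. The only difference is how you justify the strict inequality: you use an explicit downward induction on the recursive calls, whereas the paper notes that a decoration is the level of a strict ancestor (your closing cross-check). Your version therefore works directly from the code, without relying on the equivalence between the recursive and geometric definitions of the right parent.
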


\begin{proof}
A left child receives \(j+1\) by definition.  A right child inherits
the decoration of its level-\((j+1)\) parent.  A decoration is the level
of a strict ancestor, so the parent's decoration is at least \(j+2\).
This includes the current root, which is a left child of its
future parent.  Hence the two cases \(j+1\) and \(>j+1\) are exhaustive.
\end{proof}

In Figure~\ref{fig:decorated-tree}, vertical position is level and
horizontal position is in-order time.  Each assignment to
\(\mathrm{rp}[j+1]\) sends information horizontally to the next occurrence
on level \(j\):
\[
\begin{aligned}
&\text{current level-\(j\) occurrence}\\[-0.25em]
&\qquad\xrightarrow{\quad\mathrm{rp}[j+1]\quad}
  \text{next level-\(j\) occurrence}.
\end{aligned}
\]
That next occurrence may be many outputs away.

\begin{theorem}
\label{thm:rp}
The branch-based generator emits \(S_{n-1}\).  It uses \(O(1)\) work per
output.
\end{theorem}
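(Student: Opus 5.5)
We argue by induction on the number of completed iterations, maintaining an invariant tied to the decorated in-order traversal of \lstinline!gen(n - 1, n)!. Let \(v_1,v_2,\ldots,v_{2^n-1}\) be the nodes of the finite tree in in-order, so that by Lemma~\ref{lem:successor} and Section~\ref{sec:tree} their labels spell \(S_{n-1}\). The invariant \(I_t\), asserted at the top of the loop before the \(t\)-th iteration, has two parts: (a) \(\mathrm{rp}[0]\) equals the level of \(v_t\), or equals \(n\) when \(t=2^n\); and (b) for every \(0\le j<n\), \(\mathrm{rp}[j+1]\) equals the decoration of the first level-\(j\) node among \(v_t,v_{t+1},\ldots\), whenever such a node exists. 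Slots whose level has no remaining occurrence are unconstrained. For the base case \(I_1\), observe that the first level-\(j\) occurrence in in-order is on the leftmost path, which consists of left children. For \(j<n-1\) its parent is at level \(j+1\), giving decoration \(j+1\); for \(j=n-1\) the root is decorated by its future parent, again \(n\). This matches the initialization \(\mathrm{rp}[i]=i\), and \(\mathrm{rp}[0]=0\) is the level of \(v_1\).

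For the inductive step, suppose \(I_t\) holds and \(v_t\) has level \(j\). Then \(r=\mathrm{rp}[j+1]\) is the decoration of \(v_t\). We first verify part (a). If \(j>0\), the successor is the leftmost leaf of the right subtree, which has level \(0\). If \(j=0\), the successor is the right parent, which has level \(r\), or the loop terminates when \(r=n\). This is exactly \lstinline!rp[0] = r if j == 0 else 0!. For part (b), only slot \(j+1\) is written, and for every other level the "next occurrence" is unchanged, since \(v_t\) is not at that level. So it remains to show that the written value is the decoration of the next level-\(j\) node \(w\) after \(v_t\). We split according to Lemma~\ref{lem:left-right}.

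\emph{Case R2} (\(r>j+1\), so \(v_t\) is a right child). Let \(p\) be the parent of \(v_t\), at level \(j+1\). Everything in the subtree of \(p\) precedes the in-order position of the right parent of \(v_t\), so \(w\), if it exists, lies after \(p\)'s whole subtree. We claim \(w\) is a left child, and hence has decoration \(j+1\) by Lemma~\ref{lem:left-right}. Consecutive level-\(j\) nodes alternate left and right in in-order, since level-\(j\) nodes are ordered by their level-\((j+1)\) parents and each parent contributes its left child and then its right child. Since \(v_t\) is a right child, the next one is a left child.

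\emph{Case R1} (\(r=j+1<n\), so \(v_t\) is the left child of \(p\)). Here \(w\) is the right sibling of \(v_t\), and its decoration equals the decoration of \(p\). The key claim is that \(\mathrm{rp}[j+2]\) currently holds the decoration of \(p\). Since \(p\) succeeds \(v_t\) in in-order and no level-\((j+1)\) node lies between them, \(p\) is the next level-\((j+1)\) occurrence, so part (b) of \(I_t\) gives exactly this. When \(r=n\), \(v_t\) is the root and has no later occurrence, so no update is needed.

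For the bound, each iteration performs a constant number of array reads, writes, and comparisons. Counting iterations shows there are exactly \(2^n-1\) emissions followed by termination, with \(O(n)\) setup. The step I expect to be the main obstacle is part (b) in case R1. One must justify carefully that \(p\) is the next level-\((j+1)\) occurrence, and that reading slot \(j+2\) before \(p\) is emitted is sound, i.e.\ that no write to slot \(j+2\) happens between \(p\)'s predecessor at level \(j+1\) and \(v_t\). The invariant framing handles this, because writes to slot \(j+2\) occur only upon emitting level-\((j+1)\) nodes. I would state and prove the alternation fact as a small separate claim, so that it can be used in case R2.
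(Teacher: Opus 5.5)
Your proof is correct and follows essentially the same route as the paper: the same invariant that \(\mathrm{rp}[j+1]\) holds the decoration of the next not-yet-emitted level-\(j\) occurrence, the same R1/R2 case split via Lemma~\ref{lem:left-right}, the same observation that the parent \(p\) is the next level-\((j+1)\) occurrence so that \(\mathrm{rp}[j+2]\) already holds its decoration, and the same termination argument via the unique leaf decorated \(n\). Where you differ is only in presentation: you index the invariant by iteration, fold \(\mathrm{rp}[0]\) into it as part (a), and prove the left/right alternation explicitly, but the argument is the paper's.
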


\begin{proof}
Initially, the first occurrence at every level is a left child, including
the root as a child of future level \(n\), so
\(\mathrm{rp}[j+1]=j+1\) establishes the future-decoration invariant.

When level \(j\) is emitted, first read \(r=\mathrm{rp}[j+1]\); it is the
decoration of the current occurrence.  Suppose that occurrence is a left
child below the root.  The next occurrence at the same level is
its right sibling, which inherits the parent's decoration.  That parent
has not yet been emitted: it is the next occurrence at level \(j+1\),
whose decoration is \(\mathrm{rp}[j+2]\).  Assigning this value to
\(\mathrm{rp}[j+1]\) re-establishes the invariant for the next
level-\(j\) occurrence; this is R1.  If the occurrence is the finite
root, no later occurrence exists at that level, so skipping the update
preserves every value that will still be read.

If the current occurrence is a right child, the next level-\(j\)
occurrence lies in a later sibling group and is a left child, whose
decoration is \(j+1\); this is R2.  Lemma~\ref{lem:left-right} shows that
the comparison selects exactly these cases.  Only \(\mathrm{rp}[j+1]\) is
written among the decoration slots, so every other level's invariant,
including the \(\mathrm{rp}[j+2]\) value read by R1 when \(j<n-1\),
remains intact.

Finally, Lemma~\ref{lem:successor} gives the next emitted level: after a
leaf it is its decoration \(r\), and after a nonzero node it is the
leftmost leaf \(0\).  The invariant and the emitted sequence therefore
agree with the recursive traversal.  The unique leaf decorated \(n\) is
the rightmost leaf of the level-\((n-1)\) finite tree.  Its successor is
the future parent \(n\), so the termination test stops after exactly
\(2^n-1\) outputs, the prefix \(S_{n-1}\).
\end{proof}

Without the test \(\mathrm{rp}[0]=n\), the same construction can continue
indefinitely if stack slots are created lazily with default
\(\mathrm{rp}[i]=i\).  After \(t\) outputs only \(O(\log t)\) levels
have been reached.

\section{From Stack Slots to Focus Pointers}
\label{sec:focus}

We now transform the branch-based generator into Algorithm L.  Both
programs use \(j\) for the current level in slot \(0\).  The branch-based
machine additionally reads \(r=\mathrm{rp}[j+1]\), the current
occurrence's right parent.

The array indexing already agrees.  Rename \(\mathrm{rp}\) to \(f\):
\[
f[i]\leftrightarrow\mathrm{rp}[i]
\quad(0\leq i\leq n).
\]
The arrays now have exactly the same bounds and storage layout.  What
changes is only when the stack-slot updates are performed.

For the branch-based machine, a left-child occurrence below the finite
root performs R1, while a right-child occurrence performs R2.  The
root needs no same-level prediction.  Occurrences at each other
level alternate left, right, left, right.  The branch can be removed by
changing when these two assignments occur.

\paragraph{Delay R1 to the parent.}
Between a left level-\(j\) occurrence below the root and its right
sibling, their level-\((j+1)\) parent is visited.  The left occurrence's
R1 write is not needed until the right sibling appears, so it may be
delayed to that parent visit.  Its source is stable as well: only nodes
below level \(j\) are visited before that parent, and such visits cannot
write the source slot for level \(j+1\).  At a current parent level
\(j\), the delayed R1 for level \(j-1\) is
\[
f[j]\gets f[j+1].
\]
Every positive-level node has a left child, so this delayed assignment
always serves an actual right sibling.

\paragraph{Perform R2 speculatively.}
At every level-\(j\) visit, perform
\[
f[j+1]\gets j+1
\]
unconditionally.  On a right child, this is exactly the required R2
reset.  On a left child, it is premature but harmless: the intervening
parent visit performs the delayed R1 write and overwrites it before the
right sibling can read the slot.  After a right child, no level-\((j+1)\)
visit occurs before the next left level-\(j\) occurrence, so the reset
survives until it is needed.  The root is the sole exception to
the overwrite: \(f[n]\gets n\) is a no-op, and its right sibling lies
beyond the finite traversal, so the value is never read again.

Consequently every positive level-\(j\) visit performs the fixed pair
\[
\underbrace{f[j]\gets f[j+1]}_{\text{delayed R1 for level }j-1},
\qquad
\underbrace{f[j+1]\gets j+1}_{\text{speculative R2 for level }j}.
\]
The assignments occur in this order because the first must copy the old
value of \(f[j+1]\) before the second resets it.  Except at the finite
root, the branch-based machine executes one of R1 and R2 for its current
level; Algorithm L instead executes a delayed R1 for the level below and
an unconditional R2 for the current level.  Thus this is a rescheduling,
plus harmless speculative writes, rather than merely exchanging two
source lines.

\paragraph{Absorb the successor update.}
At the start of an iteration, read \(j=f[0]\) and reset \(f[0]\) to zero.
If \(j>0\), the two focus writes do not touch \(f[0]\), so the next level
is \(0\), exactly as in the branch-based machine.  If \(j=0\), there is
no lower level awaiting a delayed R1, and \(f[0]\) is instead
the current-level slot.  The same instruction therefore becomes
\[
f[0]\gets f[1],
\]
which selects the leaf's right parent as the next level; the second write
resets \(f[1]\) to \(1\).  The old \(f[1]\) is read before that reset.
These are exactly the Algorithm L updates displayed in the introduction.

\begin{lemma}[Correctness of the rescheduling]
\label{lem:rescheduling}
With \(f[i]=i\) initially, Algorithm L visits the same decorated tree
nodes, in the same order, as the branch-based generator.
\end{lemma}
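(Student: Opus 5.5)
We argue by induction on the number of iterations, comparing the two machines step by step. The branch-based machine is correct by Theorem~\ref{thm:rp}, so it suffices to show that both machines read the same value \(j=f[0]=\mathrm{rp}[0]\) at the start of every iteration. To carry this through, we will not require the two arrays to agree entry by entry. Instead, we maintain a weaker \emph{relevant-value invariant} for Algorithm L. At the start of each iteration, \(f[0]\) is the level of the next node in in-order order. Moreover, for each level \(i<n\), the entry \(f[i+1]\) must equal the decoration of the next not-yet-emitted level-\(i\) occurrence, whenever that occurrence is a left child, or a right child whose parent has already been visited.

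The exceptional case is a pending right child \(w\) at level \(i\) whose level-\((i+1)\) parent has not yet been visited. In that case \(f[i+1]\) may hold the speculative value \(i+1\). The invariant then asks only that \(f[i+2]\) hold the parent's decoration, which is already covered by the invariant at level \(i+1\).

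Initialization is immediate. With \(f[i]=i\), every next occurrence is a leftmost node, hence a left child, and its decoration is \(i+1\) by Lemma~\ref{lem:left-right}.

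For the inductive step, suppose the current node \(v\) is at level \(j\). We split into the cases \(j>0\) and \(j=0\).

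\textbf{Case \(j>0\).} The left subtree of \(v\) has just been traversed. Hence the next level-\((j-1)\) occurrence is the right child of \(v\), which was pending with its parent unvisited. By the invariant at level \(j\), \(f[j+1]\) holds the decoration of \(v\). The write \(f[j]\gets f[j+1]\) therefore installs exactly the decoration the right child inherits, which is the delayed R1 argued in the text.

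The write \(f[j+1]\gets j+1\) then concerns the next level-\(j\) occurrence \(w\) after \(v\). There are two subcases.
\begin{itemize}[nosep]
\item If \(v\) is a right child, \(w\) is a left child, so \(j+1\) is correct.
\item If \(v\) is a left child, \(w\) is its right sibling and its parent is unvisited, so \(w\) falls into the exceptional case and the value is allowed.
\end{itemize}
If \(v\) is the root, there is no \(w\). No other slots change, and \(f[0]\gets0\) agrees with Lemma~\ref{lem:successor}.

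\textbf{Case \(j=0\).} Here \(f[1]\) holds the leaf's decoration \(r\). This relies on the invariant: a leaf that is a right child has a parent that was visited just before it, so \(f[1]\) is not speculative. The write \(f[0]\gets f[1]\) sets the next level to \(r\), matching Lemma~\ref{lem:successor}, and \(f[1]\gets1\) is justified exactly as in the previous case.

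Finally, we check that the next node is again covered by the invariant. Whenever Algorithm L reads \(f[j+1]\) as a decoration, the node concerned is the right child of a just-visited parent, or a leaf read immediately after its parent's delayed R1. So the value read is never an exceptional speculative one.

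The main obstacle is stating the invariant so that it is strong enough to justify every read, yet weak enough to survive the speculative writes. The first formulation to try is the one above, which tracks "pending right child with unvisited parent" as the single exception. The delicate point is checking that R1 for \(v\) reads \(f[j+1]\) before any write can clobber it. That write could come only from a level-\(j\) visit, and none occurs between \(v\)'s subtree and \(v\) itself. The same reasoning covers the text's argument that nothing below level \(j\) writes \(f[j+1]\); the level-\((j-1)\) visits write only \(f[j]\) and \(f[j-1]\).

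Termination then follows from the agreement of \(f[0]\) with \(\mathrm{rp}[0]\). Algorithm L stops exactly when the branch-based generator does, namely after the leaf decorated \(n\).
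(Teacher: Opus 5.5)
Your proof is correct. It rests on the same two mechanisms as the paper's: the delayed R1 at a level-\((j+1)\) node copies that node's own decoration into the slot its right child will read, and the speculative R2 is either already the right value (after a right child) or is overwritten by the intervening parent before it is read. Where you differ is in how the induction is organized.

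The paper proves only a read-time invariant: whenever level \(j\) is current, \(f[j+1]\) is its decoration. It argues level by level, following the single slot \(f[j+1]\) from one level-\(j\) occurrence to the next, and relies on the fact that only level-\(j\) and level-\((j+1)\) visits ever write that slot.

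You instead maintain a snapshot invariant on the whole array at every iteration boundary, with an explicit exception for a pending right child whose parent is still unvisited. This gives a purely local induction: each iteration's two writes are checked once, against levels \(j-1\) and \(j\), and no separate argument about intervening writers is needed. It also pins down exactly when a slot holds a meaningless speculative value. The cost is the bookkeeping of the exception clause, plus the observation (which you do make) that the current node is never exceptional, because a right child always follows its parent in in-order.

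A few side remarks are loose, but your case analysis does not depend on them:
\begin{itemize}
\item A nonleaf right child's parent is visited earlier, not \emph{just} before it.
\item \(f[j+1]\) is also written by level-\((j+1)\) visits, not only level-\(j\) ones. Your case analysis already checks those writes.
\item The claim that, in the exceptional case, \(f[i+2]\) holds the parent's decoration needs the parent to be the next level-\((i+1)\) occurrence and itself non-exceptional. Both hold, since the parent's left child has already been emitted.
\end{itemize}
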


\begin{proof}
We argue in chronological in-order time, assuming the invariant at every
earlier visit: whenever level \(j\) is current, \(f[j+1]\) is its
right-parent decoration.  For the first occurrence at level \(j\), no
earlier visit can have changed \(f[j+1]\): its only writers are
level-\(j\) R2 and level-\((j+1)\) delayed R1, and neither level has yet
appeared.  Thus the slot still has its initial value \(j+1\), the
decoration of that first left child.

Thereafter, consider two consecutive occurrences at level \(j\).  If
the first is a left child, its speculative reset is overwritten at the
intervening level-\((j+1)\) parent by
\(f[j+1]\gets f[j+2]\).  At that parent visit, \(f[j+2]\) is the
parent's decoration by the chronological induction hypothesis, so the
right sibling receives the inherited decoration required by R1.  If the
first occurrence is a right child, its reset
\(f[j+1]\gets j+1\) survives until the next level-\(j\) occurrence,
which is a left child; this is R2.  These are the only assignments that
write \(f[j+1]\), so the invariant is preserved.

Finally, the update of \(f[0]\) follows the successor rule: a nonzero
node is followed by level \(0\), while a leaf is followed by its
right-parent decoration.  Hence the branch-free schedule follows the
same in-order successor at every step.  The final leaf places \(n\) in
\(f[0]\), causing Algorithm L to terminate at the same boundary.
\end{proof}

\begin{theorem}
Algorithm L emits \(S_{n-1}\), and hence flips exactly the positions required
by the \(n\)-bit BRGC.
\end{theorem}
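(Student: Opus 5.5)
The plan is to chain the results already established rather than reprove anything from scratch. By Lemma~\ref{lem:rescheduling}, Algorithm L started from \(f[i]=i\) visits the same decorated tree nodes, in the same order, as the branch-based generator. Algorithm L emits \(j=f[0]\) at each iteration, and the branch-based generator emits \(j=\mathrm{rp}[0]\), which is the level of the current node. Since \(f\) and \(\mathrm{rp}\) are identified slot by slot, the two emitted sequences coincide term by term. Lemma~\ref{lem:rescheduling} also shows that both machines stop at the same boundary: the final leaf, decorated \(n\), places \(n\) in slot \(0\). Hence they emit the same number of terms, namely \(2^n-1\).

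Next, I will invoke Theorem~\ref{thm:rp}: the branch-based generator emits exactly \(S_{n-1}\). Composing this with the previous step gives that Algorithm L emits \(S_{n-1}\).

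Finally, I will connect \(S_{n-1}\) to the BRGC using Section~\ref{sec:ruler}. There \(S_k\) is defined as the flip-position sequence of the \((k+1)\)-bit BRGC, and the recurrence \(S_{k+1}=S_k,\,k+1,\,\overleftarrow{S_k}\) is collapsed to \(S_k,\,k+1,\,S_k\) by palindromicity. So \(S_{n-1}\) is, by definition, the list of positions flipped in the \(n\)-bit BRGC after the all-zero word. Algorithm L begins with \(a=0\) and flips \(a[j]\) exactly at each emission. Therefore the words it visits are precisely the \(2^n\) BRGC codewords in order.

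The only delicate point is bookkeeping at the endpoints. First, the termination check \(j=n\) happens before any flip, so the sentinel is never emitted; Algorithm L and the branch-based loop test it at corresponding moments. Second, the speculative write \(f[n]\gets n\) at the root is never read. I expect no genuine obstacle, since both facts are already recorded in the proofs of Lemma~\ref{lem:rescheduling} and Theorem~\ref{thm:rp}.
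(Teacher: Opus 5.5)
Your proposal is correct and follows the paper's own proof: Lemma~\ref{lem:rescheduling} transfers the emitted sequence to the branch-based generator, Theorem~\ref{thm:rp} identifies that sequence as \(S_{n-1}\), and Section~\ref{sec:ruler} identifies \(S_{n-1}\) with the \(n\)-bit BRGC flip positions. One wording point: \(f\) and \(\mathrm{rp}\) share an array layout, not contents at every moment, because the delayed and speculative writes make the slots \(j\geq1\) differ in between. The term-by-term agreement of emissions therefore comes from the two machines visiting the same nodes in the same order, which is exactly what the lemma supplies.
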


\begin{proof}
By Lemma~\ref{lem:rescheduling}, Algorithm L emits the same sequence as
the branch-based generator.  Theorem~\ref{thm:rp} identifies that
sequence as \(S_{n-1}\), which Section~\ref{sec:ruler} identifies as the
\(n\)-bit BRGC flip positions.
\end{proof}

The transformation is therefore direct: retain the stack-array layout,
delay R1 to the parent, perform R2 speculatively, and merge the successor
state into slot \(f[0]\).

\section{Scope and Conclusion}

The derivation proceeds through distinct, checkable transformations:
\[
\begin{gathered}
\text{ruler sequence}\\
\mathrel{\phantom{\longrightarrow}}\downarrow\\[-0.4em]
\text{expanding in-order tree}\\
\downarrow\\[-0.4em]
\text{right-parent decorations}\\
\downarrow\\[-0.4em]
\text{successor-path contraction}\\
\downarrow\\[-0.4em]
\text{prediction-based stack preparation}\\
\downarrow\\[-0.4em]
\text{focus-pointer scheduling}.
\end{gathered}
\]
The central geometric insight is temporal.  A write to stack slot \(j\)
predicts a parameter for the next occurrence on that horizontal level,
possibly many outputs in the future.  The branch-based
algorithm makes that prediction explicitly; Algorithm L stores both
possible future roles in adjacent focus slots and updates them together.

This is not offered as the historical discovery path of
Bitner--Ehrlich--Reingold.  It is a reconstruction motivated by the
author's earlier public exploration \cite{au21}, completed here with the
successor-contraction argument, the future-stack-slot invariant, and a
separate correctness proof for focus-pointer scheduling.

\noindent\textbf{Funding.}
This research received no specific grant from funding agencies in the
public, commercial, or not-for-profit sectors.

\end{document}